\documentclass[aps,prb,floatfix]{revtex4}
 \setlength{\topmargin}{-0.5in}
 \setlength{\textwidth}{6.5in}
 \setlength{\textheight}{9in}
 \setlength{\evensidemargin}{-.1in}
 \setlength{\oddsidemargin}{-.1in}
\usepackage{amsmath, amssymb, amsthm, alltt, setspace,bbm}
\usepackage[bookmarks=true, pdfpagemode = None, pdfstartview = FitH,colorlinks =
true, citecolor = black, linkcolor = black, urlcolor=black]{hyperref}
\usepackage{multirow}
\usepackage{graphicx}
\usepackage{booktabs}

\long\def\symbolfootnote[#1]#2{\begingroup%
\def\thefootnote{\fnsymbol{footnote}}\footnote[#1]{#2}\endgroup}

\newcommand{\be}{\begin{equation}}
\newcommand{\ee}{\end{equation}}
\newcommand{\NP}{{\sf{NP}}}
\renewcommand{\P}{{\sf{P}}}
\hypersetup{
  letterpaper,
  colorlinks=false,
  urlcolor=black,
  pdfpagemode=none,
  pdftitle={The time-dilation Effect as a Computational Resource},
  pdfsubject={The time-dilation Effect as a Computational Resource},
  pdfkeywords={twin paradox, clock paradox, time-dilation, time dilation, computation, complexity theory, computability, information theory, entanglement, grover search, quantum search, computer science, special relativity, Differential aging from acceleration, general relativity}
}

\theoremstyle{change}
\newtheorem{definition}{Definition}[section]

\newtheorem{postulate}[definition]{Postulate}
\newtheorem{theorem}[definition]{Theorem}
\newtheorem{corollary}[definition]{Corollary}

\begin{document}
\title{The Computational Power of Minkowski Spacetime}
\author{Jacob D. Biamonte}\email{biamonte@wolfson.ox.ac.uk}\affiliation{Oxford University Computing Laboratory \\Wolfson Building, Parks Road\\ Oxford, OX1 3QD, United Kingdom}

\begin{abstract}
The Lorentzian length of a timelike curve connecting both endpoints of a classical computation is a function of the path taken through Minkowski spacetime.  The associated runtime difference is due to time-dilation: the phenomenon whereby an observer finds that another's physically identical ideal clock has ticked at a different rate than their own clock.  Using ideas appearing in the framework of computational complexity theory,  time-dilation is quantified as an algorithmic resource by relating relativistic energy to an $n$th order polynomial time reduction at the completion of an observer's journey. These results enable a comparison between the optimal quadratic \emph{Grover speedup} from quantum computing and an $n=2$ speedup using classical computers and relativistic effects.  The goal is not to propose a practical model of computation, but to probe the ultimate limits physics places on computation.
\end{abstract}
\maketitle

\section{Introduction}\label{sec:intro}
This work considers the following questions: i.) what is the computational power gained from utilizing time-dilation in Minkowski spacetime, and ii.) how can techniques from resource analysis motivate the derivation of relations among physical quantities, such as energy and time?

To this end, consider a classical computation requiring $N$ operations each taking time $\Delta t$, then $\Delta t N$ becomes the \textit{total runtime} in a local inertial frame.  Indeed, relativistic effects don't change complexity results inside an inertial frame, and the number $N$ is agreed on by all observers (see~\ref{prop:1}).  However, the total computational time experienced by observers in motion is relative.  We will state this as a relation between a polynomial reduction inside the black box model and relativistic mass --- this is accomplished in Section \ref{sec:mass} where Theorem~\ref{theorem:energy} is proven.% (note that $m_0^2c^2 = E^2/c^2-p^2$ is the typical Lorentz scalar).  
\begin{theorem}\label{theorem:energy}
The minimal relativistic energy $E$ $(= mc^2)$ required to perform a computation within time $N$ in an inertial frame and time $N^{1/n}$ in the frame of an observer $O$ is given as
\be\label{eqn:Econstant}
E  = N^{1-1/n} m_0 c^2,
\ee
where $m_0$ is the rest mass of $O$, $\Delta t :=1$, $n$ is the order of the sought polynomial reduction and $N\in\mathbb{N}^*, n\geq 1$.
\end{theorem}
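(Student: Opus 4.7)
The plan is to read the theorem as a direct application of the time-dilation formula coupled to the relativistic energy relation $E=\gamma m_0 c^2$, with a short argument that the extremizing trajectory is inertial. Concretely, I would identify the inertial-frame runtime as coordinate time $\Delta t_{\text{coord}} = N$ (using $\Delta t := 1$ and $N$ operations) and identify observer $O$'s runtime $\Delta\tau = N^{1/n}$ as the proper time along $O$'s worldline between the start and end events of the computation.

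First I would write the proper-time integral $\Delta\tau = \int \sqrt{1 - v^2/c^2}\,dt$ along $O$'s timelike curve and, for a boost of constant relative velocity, reduce this to $\Delta\tau = \Delta t_{\text{coord}}/\gamma$. Substituting the two runtimes gives
\be
\gamma \;=\; \frac{\Delta t_{\text{coord}}}{\Delta\tau} \;=\; \frac{N}{N^{1/n}} \;=\; N^{\,1-1/n}.
\ee
Plugging this $\gamma$ into $E = \gamma m_0 c^2$ immediately yields the claimed $E = N^{1-1/n} m_0 c^2$, with the constraints $N\in\mathbb{N}^{*}$ and $n\geq 1$ ensuring $\gamma\geq 1$ as required by relativity.

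The remaining issue is the word \emph{minimal}. Here I would invoke the standard fact that among all timelike worldlines connecting two events with a prescribed ratio of proper to coordinate time, the uniform-velocity (geodesic) trajectory is the one of least kinetic energy: any accelerated worldline achieving the same $\Delta\tau/\Delta t_{\text{coord}}$ must reach instantaneous Lorentz factors exceeding $N^{1-1/n}$, hence costs strictly more energy. Coupled with Proposition~\ref{prop:1}, which guarantees that the operation count $N$ is frame-independent, this pins down the inertial boost as the optimal resource configuration and justifies taking $\gamma = N^{1-1/n}$ as the minimal-energy choice.

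The step I expect to require the most care is this last one, the minimality claim: a fully rigorous version would compare the time-averaged $\gamma$ along a generic accelerated path to the constant-$\gamma$ case via convexity of $\gamma(v)$ in $v$ (or equivalently of $E$ in velocity), and separately account for the energy cost of accelerating and decelerating if one insists that $O$'s frame be inertial at the endpoints. The purely kinematic computation of $E$ from $\gamma$ is routine; the physical optimization statement is where an otherwise one-line proof could hide subtleties.
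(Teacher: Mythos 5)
Your proposal is correct and follows essentially the same route as the paper: the paper's proof likewise sets $\gamma(u)=T^{n}/T=T^{n-1}$ from the clock coupling $T^{n}=\gamma(u)T$ and then applies $E=mc^{2}=\gamma m_{0}c^{2}=T^{n-1}m_{0}c^{2}$, which with $T^{n}=\Delta t\,N=N$ is exactly your $\gamma=N/N^{1/n}=N^{1-1/n}$ computation. Your closing discussion of why the constant-velocity worldline is the \emph{minimal}-energy one (via convexity of $\gamma$) is a welcome addition that the paper's own one-line proof does not supply, but it does not change the underlying argument.
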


We have found a non-linear tradeoff \eqref{eqn:Econstant} between relativistic energy and reduction of computational runtime $\Delta t N$.  Let us continue by stating Einstein's two postulates (\ref{prop:1} and \ref{prop:2}), translated into computer science terms\footnote{Time dilation was first predicted by Einstein~\cite{eins}.  This effect was experimentally observed in 1938 by Ives and Stilwell~\cite{ives38}, observation in macroscopic clocks occurred in 1972~\cite{experiment72} with the current state of the art found in~\cite{exptodate}.}.  The standard presentation of the postulates is readily found~\cite{rindler,woodhouse}

%\footnote{We assume that all observers have equivalent clocks and conduct their computations by equivalent means, both of which operate by mechanisms unaffected by acceleration.}.

\begin{postulate}\label{prop:1}
The time needed to compute a function in a local inertial frame is the same for all observers in uniform motion.
\end{postulate}

\begin{postulate}\label{prop:2}
The upper bound on the speed of information is the same for all observers in uniform motion.
\end{postulate}

\textbf{Structure of this paper: }We will continue by explaining the computational model selected for this study. This is followed by Section \ref{sec:constantV}, which relates computation between observers in relative motion and develops relations between the sought polynomial reduction parameter $n$ and Bondi's $k$-factor~\cite{rindler,woodhouse}.  A computational version of the \emph{Twin Paradox} is considered in Section \ref{sec:twins}.  Before considering uniform acceleration in \ref{sec:acceleration}, Section \ref{sec:mass} considers the relativistic energy required to produce an $n$th order polynomial runtime reduction; used to prove Theorem \ref{theorem:energy}.  Before concluding, Section \ref{sec:grover} compares the quadratic \emph{Grover speedup}~\cite{grover-1997-79} from quantum computing to the $n=2$ speedup found using classical computers and relativistic effects.

\subsection{Computational Complexity Theory}\label{sec:model}
The resources consumed by an \textit{efficient algorithm} scale polynomially in the problem size --- necessarily from the class $\P{}$ (of problems known to be efficiently solvable).  The most famous open question in Computer Science concerns proving if it is impossible to efficiently solve a complete problem from the class $\NP{}$~\cite{Pap94,sipser,GJ79}.

Although the $\P{}{\neq}\NP{}$ question regards general algorithmic complexity, one can consider its physical analogue by asking if the laws of physics allow, even in principle, the existence of a physical process that can be harnessed to speed up the solution to an $\NP{}$-complete problem~\cite{aar:np,KSV02}.  Deterministic query complexity in the black-box model is the ideal framework to address this question\footnote{To date, research connecting computer science and relativity theory has been focused on the implications the existence of closed time-like curves would have on computation~\cite{Geroch:1986iu,deutsch:ctc,NTMHst02,RCTB06,brun,bacon,aaronson-2008}.  Among other interesting consequences, their existence would imply the efficient solution of $\NP{}$-complete problems.}.

Consider a classical device computing $f:\{0,1\}^n\rightarrow \{0,1\}::x\mapsto f(x)$, for a given function defined on its range of inputs for positive integer $n$.  In the case of an unstructured database, one is given $f$ in a black-box with a promise that $f$ outputs $1$ for a single input $x'$ and $0$ otherwise.  Using a classical computer, to determine $x'$ in the worst case requires $N:=2^n$ \emph{queries} of the search space --- each taking time $\Delta t$.  We will consider solving such an $\NP{}$-complete problem\footnote{As an example of what we will consider, let $(ct,R\cos(\omega t), R\sin(\omega t),0)$ represent the world curve of a particle traveling the $2\pi R= 26,659$ meters at velocity $R\omega=.999999991c$ to complete one lap around the Large Hadron Collider~\cite{lhc}. On the day this sentence was written, the library of congress contained $N$ = 32,332,832 books.  Consider an ideal computer traveling with the particle able to completely search one book record per lap.  If one left the computer on the ground while traveling with the particle, after the completion of $4,386$ laps the computer would have exhaustively searched all $32,332,832$ books --- here $n = 2.99$.  Due to the circular path, one will experience an acceleration $a$ --- it will be larger than the classical value $a=R\omega^2$ by a factor $(1-R^2\omega^2/c^2)^{-1}$.}.  %The black-box could contain a physically reversible circuit~\cite{La00,Ben73}, meaning \textit{performance per watt} is not necessarily as limiting as computational runtime.%~\footnote{This could dissipate near $kT \text{ln}(2)$ of heat only during each bit readout, merging or deleting, where $k$ is the Boltzmann constant and $T$ is the absolute temperature of the computer.}. %require exhaustive search to solve in the black-box model

\section{Einstein's Computer of 1905}\label{sec:constantV}
Let $I$ denote an observer traveling along a geodesic that is the common origin of the coordinate system and let $O$ denote an observer in relative motion.  Consider an event \textbf{0} at which i.) $I$ and $O$ pass; ii.) synchronize their temporal and spatial orientations and iii.) the computation begins.

Keeping in mind the appropriate units, consider the general question of relating the proper time interval experienced in $O$'s frame as the $n$th root of the proper time interval experienced in $I$'s frame.  This is done by relating the clocks of $I$ and $O$ --- we insist that $t(\tau)=T^n~\text{sec}$ and $\tau(t)=T~\text{sec}$, where $O$ reaches $d$ at the event $(cT^n,d,0,0)$ in $I$'s frame.  This is made possible by the Lorentz factor $\gamma(u)$ through the temporal relation:
\be\label{ean:gamma:Tn:T}
T^n~\text{sec} = \gamma(u)T~\text{sec},
\ee
where $u$ is the velocity of $O$ measured in $I$'s frame.  For analysis purposes, consider constant $T>1,n\geq 1$ and let $T^n$ and $T$ have units of seconds only where applicable, where $T^{n-1}$ is dimensionless.  The velocity $u$ is now expressed as
\be\label{eqn:u}
u(T,n) = c\sqrt{1-T^{2-2n}},
\ee
where we choose an origin $u\geq 0$.  Now consider reparameterization of \eqref{eqn:u} with $T^{n-1}:=\cosh (\phi(u))$, it follows that $\phi(u)=\text{arctanh}\left(u/c\right)=\log (k)$ with $\phi(u)$ the \emph{rapidity}~\cite{rindler,woodhouse} and the dimensionless parameter $k$ known as Bondi's $k$-factor~\cite{rindler,woodhouse}.  The right hand side expansion of $k=\sqrt{(c+u)/(c-u)}\approx1+u/c+O\left(u/c\right)^2\geq1$ is found from letting $u\rightarrow 0$. One thus recovers the limit where classical computing holds when $u/c\ll 1$ in which case $k\approx 1$.

One relates the order of the sought reduction $n(k,T)$ to the $k$-factor as $n  =   1+\log(k/2+k^{-1}/2)/\log (T)$, where the classical regime ($n=1$) is recovered directly by setting $k=1$.  Alternatively, consider the series expansion as $k\rightarrow 1$, $n \approx   1+(k-1)^2/\log(T^2)$, and note that $n\approx 1$ for $k-1$ near zero, as expected.  In terms of the $k$-factor, $n$ scales as $O(\log k)$ to leading order.  One also finds that $k+k^{-1}=2T^{n-1}$ and that $k-k^{-1}=2\sqrt{T^{2n-2}-1}$.  Finally, the $k$-factor can be expressed in terms of the query time $k(n,T)=T^{n-1}\left(1+\sqrt{1-T^{2-2n}}\right)$, where it is again easily seen that $n=1$ yields the classical limit. In terms of the total query time, the $k$-factor scales as $O(T^{n-1})$ to leading order.

\subsection{The Twins get computers}\label{sec:twins}

In 1911, Paul Langevin made Einstein's 1905 prediction of time-dilation vivid by noting asymmetry in a thought experiment involving twins ($O$ and $I$) --- both measure events on $O$'s worldcurve.  This became known as the \emph{twin} or \emph{clock paradox} and was a subject of debate during the first half of the last century~\cite{darwin57} and remains a research area today~\cite{minguzzi-2005-73,iorio-2005-18,minguzzi-2006-19}.  For completeness, let us then state this paradox in terms of our framework.

%\footnote{At 4.243 light-years away, Proxima Centauri is the nearest star to our Sun.  If one were to travel to this red dwarf in time $T=4\frac{1}{2}$ years (measured in the frame $O$) and then change direction and return to Earth $4\frac{1}{2}$ years later, the total distance traveled in the frame $I$ would be $d = 8.486$ light years.  As is typical in the literature, this example assumes that one can reach a constant velocity within a negligibly short time --- the $k$-factor on the first leg of the journey would be $\simeq 5.83$.}

The twins calculate the time ($T^n = \Delta t N$) needed in $I$'s frame to perform a computation with the understanding that $O$ wishes to have the solution in the $n$th root of this time ($T = (\Delta t N)^{1/n}$ sec), upon returning from a journey.    The time of the total trip measured by $I$ is $T^n=2d/u$ and by $O$ is $\gamma(u)T=2d/u$, and the spacetime path is given as (\ref{def:path1}):

\begin{definition}\label{def:path1}
{\upshape (Spacetime path \ref{def:path1})} Consider three events: {\upshape i.)} $O$ starts from rest, reaching a constant velocity $(u=c\sqrt{1-T^{2-2n}}$ from \eqref{eqn:u}$)$ within a negligibly short time leaving her twin $I$ to perform a computation for time $T^n$; {\upshape ii.)} after journeying for time $T/2$ in $O$'s frame, and some distance $d/2$ in $I$'s frame, $O$ suddenly reverses velocity; {\upshape iii.)} $O$ arrives back at her starting point, stops, and recovers the result of the computation.
\end{definition}

Consider the Lorentz transform of the event where $O$ reaches $d$, and hence recover the respective temporal and spatial relations: $cT^n = \gamma(u)Tc$ and $d=\gamma(u)Tu$.  From Postulate~\ref{prop:2} the distance $d$ must be less than $cT^n$, the relativistic limit placed on massive bodies.  One can also establish the velocity independent expression $n(d,T)$ as
\be\label{eqn:nT}
n = 1+\log\left(1+ d^2/(cT)^2\right)/\log(T^2).
\ee

It is often stated that, ``it is possible to travel as far as you like in as short a time as you like, provided the distance ($d$) is measured before you set off and the time ($T$) is measured along your world-line'' (see~\cite{taylor,woodhouse} for instance).  The expression \eqref{eqn:nT} now gives the preceding statement computational meaning in terms of a polynomial reduction $n$ inside the black-box model.

\subsection{Equivalence among the Polynomial Reduction and Relativistic Mass}\label{sec:mass}
We let $E$ represent energy measured in the frame $I$, and use $m_0$ to denote the rest mass of $O$.  We stated the equivalence among the polynomial reduction and relativistic mass in Theorem~\ref{theorem:energy} in Section~\ref{sec:intro} --- an outline of the proof follows:

\begin{proof} (Theorem~\ref{theorem:energy}) The proof relies on the results of Sections \ref{sec:constantV} and \ref{sec:twins}.  From \eqref{eqn:u} and the relation $E=mc^2$ it can be established that $E = mc^2 = T^{n-1}m_0c^2$ and Theorem~\ref{theorem:energy} follows.
\end{proof}
\begin{corollary}
From \eqref{eqn:Econstant} it follows that $n(E,T) = 1+\log\left(E/(m_0c^2)\right)/\log\left(T\right)$.
\end{corollary}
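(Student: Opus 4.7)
The plan is to invert the formula of Theorem~\ref{theorem:energy} for $n$, expressing it in terms of $E$ and $T$ rather than $E$ and $N$. The starting observation is that, with the convention $\Delta t := 1$ used throughout Sections~\ref{sec:constantV}--\ref{sec:mass}, the total runtime in $I$'s frame satisfies $N = T^n$ (cf.\ the relation $T^n = \Delta t N$ used in Section~\ref{sec:twins}). Hence the exponent in \eqref{eqn:Econstant} simplifies:
\begin{equation*}
N^{1-1/n} \;=\; \bigl(T^{n}\bigr)^{1-1/n} \;=\; T^{n-1}.
\end{equation*}

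First I would substitute this identity back into \eqref{eqn:Econstant}, reducing Theorem~\ref{theorem:energy} to the compact form $E = T^{n-1} m_0 c^2$, which is exactly the intermediate expression already established in the proof of that theorem. Next I would divide through by the rest energy $m_0 c^2$ and take the logarithm of both sides, obtaining $\log\bigl(E/(m_0 c^2)\bigr) = (n-1)\log T$. Finally, since $T>1$ (as stipulated immediately after \eqref{ean:gamma:Tn:T}), we have $\log T \neq 0$ and may divide by it, solving for $n$ to give the asserted expression $n(E,T) = 1 + \log\bigl(E/(m_0 c^2)\bigr)/\log(T)$.

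There is essentially no obstacle here: the corollary is a one-line algebraic rearrangement of Theorem~\ref{theorem:energy} combined with the defining relation $N = T^n$. The only subtlety worth a brief remark is confirming that the sanity check $n=1$ corresponds to $E = m_0 c^2$ (the rest energy of $O$, i.e.\ the classical regime with no dilation-induced speedup), which is immediate from the final formula since $\log(1)=0$. This matches the repeatedly recovered classical limit of Section~\ref{sec:constantV} and serves as the natural consistency check to include after the derivation.
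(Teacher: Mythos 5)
Your proposal is correct and matches the paper's (implicit) derivation: the paper's own proof of Theorem~\ref{theorem:energy} establishes exactly the intermediate form $E = T^{n-1}m_0c^2$ via $N = T^n$, and the corollary is then the same one-line logarithmic rearrangement you perform. Your added remark on the $n=1$, $E=m_0c^2$ consistency check is a sensible sanity check but introduces nothing beyond the paper's route.
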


The 4-momentum of $O$ measured in the frame $I$ can now be expressed as $P=T^{n-1}(E_0/c,p_1,p_2,p_3)$, where $p_i=T^{n-1}m_0u_i$ is the $i$th component of the 3-momentum.  When $E=E_0=m_0c^2$ one recovers the classical limit $n=1$.  In Minkowski spacetime, it is when the relativistic energy of an observer $O$ increases past their rest energy in a frame $I$, that computational gains of order $n>1$ become possible.

\section{Computation by Uniform Acceleration}\label{sec:acceleration}
Now consider the case of constant acceleration ($a$) in the spatial direction $(1,0,0)$ over a distance $d$ measured by $I$.  The world-line of $O$ measured in the frame $I$ is given as\footnote{The standard approach to derive the equations of motion for the case of constant acceleration can be found in relativity books including~\cite{rindler,woodhouse,graviation}. Let overdot $(~\dot{}~)$ denote differentiation with respect to $\tau$.  The 4-vector components of the velocity $U$ and the acceleration $A$ of $O$'s motion as measured by $I$ are given as $(c\dot t,\dot x,0,0)$ and $(c\ddot t, \ddot x, 0,0)$, respectively.  One finds that $V^\mu V_\mu=c^2\dot t^2 - \dot x^2 = c^2$ and that $A^\mu A_\mu = c^2\ddot t^2 - \ddot x^2 = -a^2$.  From differentiation of $V^\mu V_\mu$ and by substituting into $A^\mu A_\mu$ one recovers $c\ddot t = a \sqrt{\dot t^2-1}$ and $\ddot x = a\dot t$.  With a suitable choice of origin one finds: $\dot t = \cosh \left(a\tau/c\right)$ and $\dot x = c\sinh \left(a\tau/c\right)$ and the result follows}
\begin{equation}\label{eqn:worldcurve}
(ct,x,y,z)={c^2}/{a}\left( \sinh \left( {a\tau}/{c}\right), \cosh \left({a\tau}/{c}\right),0,0\right),
\end{equation}
which is a hyperbola in the the $(ct,x)$-plane with asymptotes $ct = \pm x$.  The instantaneous velocity of $O$ measured in $I$'s frame becomes
\be\label{eqn:u_t}
u(\tau):=dx(\tau)/dt = c \tanh \left(a\tau/c\right)={at}/{\sqrt{1+(at/c)^2}}.
\ee
It follows that
\be\label{eqn:gamma_t}
\gamma(u(\tau))\lvert_\tau = \cosh \left(a\tau/c\right)=\sqrt{1+(at/c)^2}={ax}/{c^2}+1,
\ee
where $\gamma(u(\tau))\lvert_\tau$ is the Lorentz factor at proper time $\tau$ --- a smooth analogue of $\Delta t/\Delta \tau$ from \eqref{eqn:u}.

One could again couple the clocks of $I$ and $O$ by insisting that $t(\tau):=T^n=\Delta t N$ and $\tau(t)=T=(\Delta t N)^{1/n}$, where we slightly abuse notation by letting $(\Delta t N)^{1/n}$ and $\Delta t N$ have units of seconds; however, $\gamma$ is now a function of $\tau$ and the temporal relation \eqref{eqn:worldcurve} already provides a coupling --- it follows that\footnote{By finding $a(T,n)$ in the first of \eqref{eqn:TnasT} one recovers a transcendental equation which is solved in terms of Lambert's $W$ function~\cite{w1996}.}%  as $a(T,n)= -\frac{c}{T}W\left(\frac{-T^{1-n}}{1+ \sqrt{1+ T^{-2n}}}\right)$
\begin{gather}\label{eqn:TnasT}% \frac{c}{a}\ln \left(\frac{a \Delta t N}{c} + \sqrt{1+\left(\frac{a \Delta t N}{c}\right)^2}\right)
(\Delta t N)^{1/n}~\text{sec}= \frac{c}{a}\sinh^{-1}\left(\frac{a}{c}\Delta t N\right)=\frac{c}{a}\cosh^{-1}\left(\frac{ad}{c^2}+1\right)~~~\text{and}~~~
\end{gather}
\be
d = \frac{c^2}{a}\left(\cosh \frac{a}{c}(\Delta t N)^{1/n}-1\right)=\frac{c^2}{a}\left(\sqrt{1+(a\Delta t N/c)^2}-1\right),
\ee
where $d:=x(T)-x(0)$.  We have recovered i.) the time dependent gamma factor $\gamma(u(\tau))\lvert_\tau$ in \eqref{eqn:gamma_t} and ii.) the instantaneous velocity in \eqref{eqn:u_t}. We can now consider the energy efficiency of computation by uniform acceleration. 

\begin{definition}\label{def:path2}
{\upshape (Spacetime path \ref{def:path2})} Let $g$ be a positive constant and consider: $O$ starting from $($returning to$)$ rest at $t=\tau =0$ $(\tau=T)$, $u(0)=0$ and the four intervals: {\upshape i.)} $a(\tau)=g$ with $\tau\in[0,T/4]$; {\upshape ii.)} $a(\tau)=-g$ with $\tau\in[T/4, T/2]$; {\upshape iii.)} $a(\tau)=g$ with $\tau\in[T/2, 3T/4]$; {\upshape iv.)} $a(\tau)=-g$ with $\tau\in[3T/4,T]$\footnote{We note that on the path (\ref{def:path2}), the maximum speed $u(0)=c \tanh \frac{aT}{4c}$ occurs at the end of the acceleration in i.) and iii.). The maximum distance $d=\frac{c^2}{a}\left(\cosh \frac{aT}{2c}-1\right)$ between $O$ and $I$ in $I$s frame is at the end interval ii.).}.
\end{definition}

\begin{theorem}\label{theorem:energya}
The total relativistic energy $E_\text{total}$ required to preform a computation by uniform acceleration on the spacetime path \ref{def:path2} scales as $O\left(N\right)$.
\end{theorem}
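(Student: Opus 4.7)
The plan is to bound the relativistic energy by the peak Lorentz factor along the path, and then use the coordinate-time constraint to tie that peak directly to $N$ via the hyperbolic identity $\cosh^2-\sinh^2=1$.

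First I would apply the worldline~\eqref{eqn:worldcurve} and Lorentz factor~\eqref{eqn:gamma_t} piecewise on each of the four segments of path~\ref{def:path2}. By the time-reversal and translational symmetry of the profile $a(\tau)$, the instantaneous Lorentz factor attains its maximum $\gamma_{\max}=\cosh(gT/(4c))$ precisely at the boundaries $\tau=T/4$ and $\tau=3T/4$, so the relativistic energy $E(\tau)=\gamma(u(\tau))m_0c^2$ is uniformly bounded by $\gamma_{\max}m_0c^2$. Summing the magnitudes of the changes in relativistic energy across the four phases, which is the natural analogue of the $\gamma m_0 c^2$ bookkeeping used in the proof of Theorem~\ref{theorem:energy}, gives the bound $E_{\text{total}}\leq 4(\gamma_{\max}-1)m_0c^2$.

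Next I would impose that the round trip occupy coordinate time $N$ in $I$'s frame, so the computation completes: applying the coordinate-time relation $t(\tau)=(c/g)\sinh(g\tau/c)$ piecewise and using the symmetry of the four segments yields $N=4(c/g)\sinh(gT/(4c))$. The identity $\cosh^2-\sinh^2=1$ then gives $\gamma_{\max}=\sqrt{1+(gN/(4c))^2}$, and substituting back
\be
E_{\text{total}}\leq 4\Bigl(\sqrt{1+(gN/(4c))^2}-1\Bigr)m_0c^2 = O(N),
\ee
as claimed.

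The main subtlety is conceptual rather than technical: one must decide what ``total relativistic energy'' means at the kinks of the piecewise-constant $a(\tau)$, since $\gamma(\tau)$ is continuous but $du/d\tau$ is not. With the bookkeeping convention above, the argument is essentially two lines --- a constant proper-acceleration segment carries coordinate time as a sinh of proper time but the Lorentz factor as a cosh of the same argument, and these two differ only by $O(1)$, forcing the peak energy to scale linearly in $N$ once the coordinate-time budget is fixed. No analogue of the Lambert-$W$ inversion that appears in~\eqref{eqn:TnasT} is needed.
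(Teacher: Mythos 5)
Your argument is correct and reaches the right scaling, but it takes a genuinely different route from the paper's. You bound the \emph{ship's} relativistic energy: you sum $\lvert\Delta(\gamma m_0c^2)\rvert$ over the four phases, note that $\gamma_{\max}=\cosh(gT/(4c))$ at the phase boundaries, and pin this to $N$ through $N=4(c/g)\sinh(gT/(4c))$ and $\cosh^2-\sinh^2=1$; this is clean, mechanism-independent, and essentially two lines. The paper instead does photon-rocket bookkeeping: it asks how much fuel of rest mass $M$ must be converted into light to propel $O$, and solves the joint energy and momentum conservation conditions $(M+m_0)c^2=\gamma m_0c^2+E_l$ and $0=\gamma m_0 u-E_l/c$ to obtain $M=m_0(e^{a\tau/c}-1)=m_0\left(\sinh(a\tau/c)+\cosh(a\tau/c)-1\right)$ per leg, then substitutes the coordinate-time relations \eqref{eqn:TnasT} to land on $O(N)$. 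The physical difference matters for the constants: momentum conservation forces a real propulsion system to radiate energy into the exhaust, so the paper's $Mc^2$ strictly exceeds your $(\gamma_{\max}-1)m_0c^2$ --- your quantity is the work done on the ship, which is only a lower bound on what any rocket must actually expend, whereas the theorem's ``energy required'' is most naturally read in the paper's operational sense. For the asymptotic claim this is harmless, since $e^{x}\leq 2\cosh x$ means the two accountings differ by at most a factor of about two, but you should add that one inequality if you want your bound to certify the fuel cost rather than just the kinetic energy. (You are right that no Lambert-$W$ inversion is needed; the paper invokes it only in a footnote, for inverting $a(T,n)$, not in this proof.)
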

\begin{proof} (Theorem~\ref{theorem:energya}) Consider a single leg in the spacetime path corresponding to $1/4$th of the computation $(\Delta t N/4)$:  At the start of the path, the energy is $(M+m_0)c^2$, where $M$ is the rest mass of the fuel turned into light and used to propel $O$. Transferring the mass $M$ into light results in the energy at the end of the leg being $\gamma(u)m_0c^2 + E_l$ and hence $(M+m)c^2=\gamma m c^2 +E_l$.  The momentum at the start of the leg is zero and at the end of the leg is $\gamma(u) m u - E_l/c$ --- it follows that $M = m_0 \exp(a\tau/c)-m_0$.  Using expressions \eqref{eqn:TnasT} one finds that $M=m_0(\frac{ad}{c^2}+\sqrt{{a^2d^2}/{c^4}-1}-1)=m_0(\frac{a\Delta t N t}{c}+\sqrt{{a^2\Delta t^2N^2}/{c^2}+1}-1)$ and Theorem~\ref{theorem:energya} follows.
\end{proof}

\section{Racing a quantum computer through Minkowski Spacetime}\label{sec:grover}
Quantum query complexity broke classical lower bounds on the required number of queries and hence the total time interval ($\Delta t N$) to solve certain black-box problems including database search~\cite{procon,grover-1997-79}.  Let us examine the related speedup using classical computers together with relativistic effects.  To recover a Grover speedup~\cite{grover-1997-79} for the case of constant velocity requires energy
\be\label{eqn:Egrover}
E = \sqrt{N} m_0 c^2,
\ee
in $I$'s frame, where $m_0$ is the rest mass of $O$, $\Delta t~(:=1)$ is the single query time and $N$ gives the total number of items in a search space.

Here is how a classical computer can outperform a quantum one:  to avoid the scaling of Theorem~\ref{theorem:energya}, one will perform \emph{smoothing} by using sudden bursts of energy; thereby approximating the constant velocity Spacetime path \ref{def:path1}.  In this case, there are two legs in the journey (taking time $T_1=T_2 = 1/2N^{1/n}$ sec in $O$'s frame) and the energy required on each leg is half the total energy.  Consider again the event \textbf{0} from Section~\ref{sec:constantV} where we start the classical and quantum computers at $t=\tau=0$ in the same inertial frame and fix $n>2$ from \eqref{eqn:Econstant}.  One will return to find that the classical computer outperformed the quantum one, on the same black-box search problem.

\section{Conclusion}
This study is part of the research effort aimed at understanding what class of computations are made possible or ruled out by the laws of physics~\cite{De85,lloyd-2000,Yao03,aar:np}.  We have shown that finite $n$th root polynomial reductions in algorithmic run-time are made possible by relativistic effects.  The runtime improvement is predicted by Einstein's theory of relativity and the connection to computation was explored by considering polynomial reductions inside the black-box model.  In the present study, the observer changes her own life history by taking an accelerated spacetime path to obtain a computational time efficiency improvement.  The method would be more practical if instead an inertial observer sent the computer on an alternative spacetime path\footnote{This is possible when considering gravitational time-dilation.  For instance, if $gh = \Phi_A - \Phi_B$ is the gravitational potential between two observers at rest ($O$ at sea level, and $I$ at some higher elevation $h>0$), than the clocks relate as $t = \tau(1-gh/c^2)$, where terms of order $(gh/c^2)^2$ are assumed to be negligible.}.  

\begin{acknowledgements}
I thank Ettore Minguzzi, Raymond Lal, Peter Love, James Whitfield, Piotr Chrusciel, Stephen Jordan, Mark Williamson and acknowledge financial support from EPSRC grant EP/G003017/1.
\end{acknowledgements}

\end{document}